\providecommand{\tabularnewline}{\\}
\theoremstyle{plain}
\newtheorem{thm}{\protect\theoremname}
\theoremstyle{plain}
\newtheorem{lem}[thm]{\protect\lemmaname}
\newenvironment{proof}[1][\protect\proofname]{\par
	\normalfont\topsep6\p@\@plus6\p@\relax
	\trivlist
	\itemindent\parindent
	\item[\hskip\labelsep\scshape #1]\ignorespaces
}{%
	\endtrivlist\@endpefalse
}
\providecommand{\proofname}{Proof}
\definecolor{ForestGreen}{rgb}{0.1333,0.5451,0.1333}
\definecolor{DarkRed}{rgb}{0.8,0,0}
\definecolor{Red}{rgb}{1,0,0}
\let\oldlem\lem
\renewcommand{\lem}{
\crefalias{thm}{lemma}
\oldlem
}
\let\ref\Cref
\providecommand{\lemmaname}{Lemma}
\providecommand{\theoremname}{Theorem}
\begin{document}
\global\long\def\id{\openone}%

\global\long\def\gate#1{\textsc{\textsc{\textsc{#1}}}}%

\global\long\def\T{\gate T}%

\global\long\def\ket#1{\left|#1\right\rangle }%

\global\long\def\bra#1{\left\langle #1\right|}%

\global\long\def\braket#1#2{\left\langle #1|#2\right\rangle }%

\global\long\def\ketbra#1#2{\ket{#1}\bra{#2}}%

\global\long\def\tr#1#2{\text{Tr}_{#1}\left[#2\right]}%

\global\long\def\norm#1{\left\Vert #1\right\Vert }%

\global\long\def\ghl#1{\left\Vert \text{#1}\right\Vert }%

\global\long\def\comment#1{\text{#1}}%

\global\long\def\vcon{c}%

\global\long\def\vrots{n}%

\global\long\def\U#1{U_{#1}}%

\global\long\def\Uc#1{\mathcal{U}_{#1}}%

\title{Halving the cost of quantum multiplexed rotations}
\author{Guang Hao Low}
\orcid{0000-0002-6934-1052}

\email{guanghao.low@microsoft.com}

\affiliation{Microsoft Quantum, Redmond, WA 98052, USA}
\maketitle
\begin{abstract}
We improve the number of $\gate T$ gates needed for a $b$-bit approximation
of a multiplexed quantum gate with $\vcon$ controls applying $\vrots$
single-qubit arbitrary phase rotations from $4\vrots b+\mathcal{O}(\sqrt{c\vrots b})$
to $2\vrots b+\mathcal{O}(\sqrt{c\vrots b})$, and reduce the number
of qubits needed by up to a factor of two. This generic quantum circuit
primitive is found in many quantum algorithms, and our results roughly
halve the cost of state-of-art electronic structure simulations based
on qubitization of double-factorized or tensor-hypercontracted representations.
We achieve this by extending recent ideas on stochastic compilation
of quantum circuits to classical data and discuss space-time trade-offs
and concentration of measure in its implementation.
\end{abstract}

\section{Introduction}

Many quantum algorithms with an exponential run-time advantage over
their classical counterparts require large numbers of qubits and quantum
gates. Applications at scientifically or industrially interesting
scales include estimating energy levels of molecules with hundreds
of spin-orbitals and electrons \citep{vonBurg2020carbon,Lee2020hypercontraction}
and factoring RSA integers with thousands of bits \citep{Gidney2021Factor}.
Solving these problems require at least that many qubits just to encode
the input, upon which billions to trillions of elementary quantum
gates are applied.

At large scale, quantum computation on noisy physical hardware requires
fault-tolerant quantum gates on logical qubits in a quantum error
correcting code \citep{Shor1996FaultTolerance}. Although Clifford
gates $\left\{ \gate{Hadamard},\textsc{Phase},\textsc{Cnot}\right\} $
can be implemented transversally and thus fault-tolerantly on many
error correcting codes, they must be augmented by non-Clifford gates,
typically the $\gate T$ gate, to achieve universal quantum computation.
As a simultaneous transveral implementation of the $\gate T$ gate
is impossible \citep{Zeng2011Transversality}, fault-tolerant $\gate T$
gates are realized by sophisticated techniques such as magic state
distillation \citep{Bravyi2005Universal} or gauge fixing \citep{Paetznick2013GaugeFixing}
that are orders of magnitude more costly. The total number of $\textsc{T}$
gates is thus a good heuristic for understanding the real-world cost
of fault-tolerant quantum algorithms.

Optimizing the decomposition of arbitrary quantum algorithms into
the fewest number of T gates is NP-hard. Hence algorithms are typically
expressed through higher level subroutines for which optimal decompositions
are known such as table-lookup \citep{Low2018Trading}, arithmetic
operators \citep{Gidey2018Addition}, rotations \citep{Ross2016Optimal},
and Fourier transforms \citep{Nam2020Fourier}.

We focus on the multiplexed rotation gate

\begin{align}
\U{\vec{\theta}} & =\sum_{j=0}^{\vcon-1}\ketbra jj\otimes e^{i2\pi\theta_{j}Z},\label{eq:multiplex}
\end{align}
which is another common subroutine. This subroutine is a large fraction
of the cost of important quantum algorithms such as arbitrary quantum
state preparation and unitary synthesis \citep{shende2006synthesis},
and more recently, state-of-art simulations of electronic structure
\citep{vonBurg2020carbon,Lee2020hypercontraction,Low2016Qubitization}.
As arbitrary unitary quantum algorithms also decompose into products
$V=\prod_{k}e^{i2\pi\theta_{k}Z}\cdot V_{k}$ of rotations interleaved
by some unitaries $\vec{V}$, the intuitive application of \ref{eq:multiplex}
lies in executing a superposition of quantum algorithms
\begin{equation}
\U{\overleftrightarrow{\theta}}=\left(\prod_{k=0}^{\vrots-1}\U{\theta_{k,;}}\cdot V_{k}\right),\label{eq:multiplex_many}
\end{equation}
with potentially very different parameters $\theta_{;,j}$, but otherwise
all with the same structure $\vec{V}$.

\begin{table}
\begin{centering}
\begin{tabular}{|c|c|}
\hline 
Result & Number of $\gate{Toffoli}$ gates\tabularnewline
\hline 
\citep{shende2006synthesis,Kliuchnikov2013synthesis} & $\mathcal{O}\left(\vcon\vrots\log\left(\frac{\vcon\vrots}{\epsilon}\right)\right)$\tabularnewline
\hline 
\citep{Childs2018Towards} & $\mathcal{O}\left(\vcon\vrots\log\left(\frac{\vrots}{\epsilon}\right)\right)$\tabularnewline
\hline 
\citep{vonBurg2020carbon} & $\vrots\left\lceil \log_{2}\left(\frac{\vrots\pi}{\epsilon}\right)\right\rceil +\mathcal{O}\left(\sqrt{\vcon\vrots\log_{2}\left(\frac{\vrots}{\epsilon}\right)}\right)$\tabularnewline
\hline 
This work & $\vrots\left\lceil \frac{1}{2}\log_{2}\left(\frac{\vrots\pi^{2}}{2\epsilon}\right)\right\rceil +\mathcal{O}\left(\sqrt{\vcon\vrots\log_{2}\left(\frac{\vrots}{\epsilon}\right)}\right)$\tabularnewline
\hline 
\end{tabular}
\par\end{centering}
\caption{\label{tab:prior_art}Cost comparison of implementing fault-tolerant
multiplexed rotation \ref{eq:multiplex}. Further variations realizing
a $\protect\gate T$ gate -- qubit tradeoffs are in \ref{tab:component_costs}.}
\end{table}

\begin{table*}[t]
\begin{centering}
\begin{tabular}{|c|c|c|c|c|c|c|c|c|c|c|c|}
\hline 
\multirow{2}{*}{System} & \multirow{2}{*}{Method} & \multirow{2}{*}{$N$} & \multirow{2}{*}{$c/10^{2}$} & \multirow{2}{*}{$M/10^{5}$} & \multirow{2}{*}{$\lambda$} & \multicolumn{3}{c|}{Deterministic truncation} & \multicolumn{3}{c|}{Randomized truncation}\tabularnewline
\cline{7-12} \cline{8-12} \cline{9-12} \cline{10-12} \cline{11-12} \cline{12-12} 
 &  &  &  &  &  & $b$ & $\gate{Toffoli}$ & Qubits & $b$ & $\gate{Toffoli}$ & Qubits\tabularnewline
\hline 
FeMoco \citep{vonBurg2020carbon} & $H_{\mathrm{DF}}$ & $54$ & $240$ & $4.7$ & $2$ & $33$ & $2.3\times10^{10}$ & $3.6\times10^{3}$ & $18$ & $82\%$ & $64\%$\tabularnewline
\hline 
XVIII \citep{vonBurg2020carbon} & $H_{\mathrm{DF}}$ & $56$ & $300$ & $4.7$ & $4$ & $33$ & $2.1\times10^{10}$ & $7.4\times10^{3}$ & $18$ & $71\%$ & $62\%$\tabularnewline
\hline 
FeMoco \citep{Lee2020hypercontraction} & $H_{\mathrm{DF}}$ & $54$ & $130$ & $4.6$ & $4$ & $16$ & $1.0\times10^{10}$ & $3.7\times10^{3}$ & $18$ & $114\%$ & $127\%$\tabularnewline
\hline 
FeMoco \citep{Lee2020hypercontraction} & $H_{\mathrm{THC}}$ & $54$ & $3.5$ & $4.8$ & $2$ & $16$ & $\ensuremath{5.3\times10^{9}}$ & $2.1\times10^{3}$ & $18$ & $108\%$ & $110\%$\tabularnewline
\hline 
\end{tabular}
\par\end{centering}
\caption{\label{tab:chem_comparison}Cost comparison of simulating active site
configurations of a nitrogen fixation enzyme (FeMoco) \citep{Reiher2016Reaction}
and a carbon dioxide fixation catalyst (XVIII) \citep{vonBurg2020carbon}
using a randomized implementation of \ref{eq:multiplex_many}. Note
that the choice of $b=\left\lceil \log_{2}\left(160MN\right)\right\rceil \approx33$
by deterministic truncation rigorously upper bounds the error of simulation
\citep{vonBurg2020carbon} and is a fair comparison with our randomized
scheme. In contrast, the choice of $b=16$ was obtained by heuristics
and would be also be roughly twice as large under provable deterministic
conditions. The $\lambda$ parameter controls the space-time trade-off
of table-lookup as in \ref{tab:component_costs}. Due to spin symmetry
in these examples, the Givens rotations of size $N$ here applies
to $2N$-spin-orbital systems.}
\end{table*}

Our main contribution is roughly halving the $\gate T$ gate cost
of implementing \ref{eq:multiplex} and its occurrences in \ref{eq:multiplex_many}
compared to the best prior methods summarized in \ref{tab:prior_art}.
Given a target error $\epsilon$ in diamond distance, we show in \ref{sec:Multiplexed-rotation-circuit}
that it suffices to use
\begin{align}
\vrots\left\lceil \frac{1}{2}\log_{2}\left(\frac{\vrots\pi^{2}}{2\epsilon}\right)\right\rceil +\mathcal{O}\left(\sqrt{\vcon\vrots\log_{2}\left(\frac{\vrots}{\epsilon}\right)}\right)\label{eq:main_result}
\end{align}
$\gate{Toffoli}$ gates, which we refer to interchangeably as four
$\gate T$ gates and vice-versa \citep{Howard2017MagicStates,Gidey2018Addition}.
Our approach also enables a trade-off between $\gate T$ count and
qubit count. This result is widely applicable. For instance we demonstrate
in \ref{tab:chem_comparison} and \ref{sec:Simulation-of-electronic}
how our results almost half the cost of state-of-art electronic structure
simulation. 

The basic idea exploits classical randomness to approximate real numbers
using a $b/2$-bit randomized truncation scheme instead of deterministic
truncating to $b$-bit as outlined in \ref{sec:Randomization-as-a}.
Though our results are related to and can be proven using more general
machinery on mixing unitaries in \citep{Campbell2017mixing,Hastings2016Incoherent},
our construction uses a very simple randomization that admits a simpler
proof with tighter bounds. Of independent interest in \ref{sec:Concentration-of-measure}
is the concentration of measure of our randomized implementation of
multiplexed rotations.

\section{\label{sec:Randomization-as-a}Randomization as a resource}

The trace distance $\mathcal{D}$ is a good characterization of the
difference between quantum states and channels. For any two quantum
states $\rho$ and $\rho'$, 
\begin{align}
\mathcal{D}(\rho,\rho') & \doteq\frac{1}{2}\norm{\rho-\rho'}_{1},\label{eq:trace_distance}
\end{align}
where $\norm{\cdot}_{1}$ is the Schatten $1$-norm. This equals the
largest possible total variational distance of measurement outcomes.
Suppose $\left\{ E_{m}\right\} $ is a positive operator valued measurement
with corresponding measurement probabilities $p_{m}=\tr{}{\rho E_{m}}$
for outcome $m$, and similarly for $p_{m}'$. The distance between
these distributions is then $\max_{\left\{ E_{m}\right\} }\frac{1}{2}\sum_{m}|p_{m}-p_{m}'|=\mathcal{D}_{\mathrm{tr}}(\rho,\rho')$
\citep{Nielsen2004}. In the case of finite dimensional quantum channels
$\mathcal{V}$ and $\mathcal{V}'$, it is common to use the completely
bounded trace norm, or diamond distance \citep{Kitaev1997Diamond}
\begin{align}
\mathcal{D}_{\diamond}(\mathcal{V},\mathcal{V}') & =\max_{\rho}\mathcal{D}\left(\mathcal{V}\otimes\mathcal{I}\left(\rho\right),\mathcal{V}'\otimes\mathcal{I}\left(\rho\right)\right).\label{eq:diamond_distance}
\end{align}
This expression through the trace distance highlights its operational
meaning. It characterizes the distinguishability of quantum channels
with respect to any quantum state, even allowing for entanglement
assistance.

Although \ref{eq:diamond_distance} is difficult to evaluate in general,
simplifications are known. For instance, convexity of trace distance
implies that maximization over density matrices $\rho$ can be replaced
by that over pure states. Moreover, the dimension of $\mathcal{I}$
may be limited to that of $\mathcal{V}$ without loss of generality.
In fact, when $\mathcal{V}$ and $\mathcal{V}'$ correspond to unitaries
$V$ and $V'$, stabilization by $\mathcal{I}$ is unnecessary \citep{Wang2013Solovay}.
Our work relies on the following special case, which is tighter than
some similar results \citep{Campbell2017mixing,Hastings2016Incoherent}.
\begin{lem}[{\label{lem:Diamond-distance}\citep[Lemmas 2]{Chen2020} Diamond distance
of unitary ensembles}]
Let $V$ be unitary and let the channel $\text{\ensuremath{\mathcal{V}'}}(\cdot)=\mathbb{E}\left[V^{'\dagger}(\cdot)V'\right]$
apply the randomly sampled unitary $V'$. Then
\begin{equation}
\mathcal{D}_{\diamond}(\mathcal{V},\mathcal{V}')\le\norm{V-\mathbb{E}\left[V'\right]}.
\end{equation}
\end{lem}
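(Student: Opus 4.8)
The plan is to reduce the diamond distance to a maximization over pure inputs and then exploit that the ideal output is a pure state. First I would invoke convexity of the trace distance to restrict the maximization in \ref{eq:diamond_distance} to pure states $\rho=\ketbra\psi\psi$ on the system tensored with the reference. Writing $\ket a=(V^\dagger\otimes\id)\ket\psi$ and $\ket{a'}=(V'^\dagger\otimes\id)\ket\psi$, the ideal output is the pure state $\mathcal V(\rho)=\ketbra aa$, whereas the randomized output is the mixture $\tau=\mathbb E[\ketbra{a'}{a'}]$; crucially $\mathbb E\ket{a-a'}=((V-\mathbb E[V'])^\dagger\otimes\id)\ket\psi$ has norm at most $\norm{V-\mathbb E[V']}$.

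Because $\ketbra aa-\tau$ is a rank-one positive operator minus a positive operator, eigenvalue interlacing shows that it has at most one positive eigenvalue; since it is also traceless, $\mathcal D(\mathcal V(\rho),\mathcal V'(\rho))=\lambda_{\max}(\ketbra aa-\tau)=\max_{\norm w=1}(|\braket wa|^2-\mathbb E|\braket w{a'}|^2)$. I would then expand each realization using $|\braket wa|^2-|\braket w{a'}|^2=2\,\mathrm{Re}[\braket aw\braket w{a-a'}]-|\braket w{a-a'}|^2$ and average. The linear term becomes $2\,\mathrm{Re}[\braket aw\braket w{\mathbb E(a-a')}]$, which is controlled by the target quantity via Cauchy--Schwarz, and the quadratic term $-\mathbb E|\braket w{a-a'}|^2\le0$ carries a helpful sign.

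The hard part will be extracting the sharp constant from this averaging. Discarding the quadratic term, or replacing $\mathbb E|\braket w{a-a'}|^2$ by its Jensen lower bound $|\braket w{\mathbb E(a-a')}|^2$, is too lossy, because the linear term on its own can reach twice the target. The sharp estimate instead requires keeping the full second moment and using that every $\ket{a'}$ is a \emph{unit} vector: this forces the covariance $\tau-\ketbra{\bar a}{\bar a}$, with $\ket{\bar a}=\mathbb E\ket{a'}$, to be large precisely when the $\ket{a'}$ spread away from a common direction, which is also when $\norm{a-\bar a}$ is large. Turning this qualitative coupling between the first-order deviation and the variance into a quantitative bound -- rather than estimating the two orders separately -- is the crux of the argument, and it is exactly this that sharpens the Campbell--Hastings mixing lemmas, whose bounds additionally pay for the worst-case single-sample distance $\max\norm{V-V'}$.
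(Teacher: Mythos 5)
There is no in-paper proof to measure you against: the paper imports this lemma verbatim from Chen et al.\ and never proves it, so your attempt must be judged on its own correctness. Your scaffolding is sound, and in fact the steps you dismiss as ``too lossy'' already constitute a complete proof of the weaker bound $\mathcal{D}_{\diamond}(\mathcal{V},\mathcal{V}')\le2\norm{V-\mathbb{E}\left[V'\right]}$: restriction to pure inputs, the observation that $\ketbra aa-\tau$ is traceless with at most one positive eigenvalue so the trace distance equals $\lambda_{\max}$, the operator-Jensen inequality $\tau\succeq\ketbra{\bar{a}}{\bar{a}}$ with $\ket{\bar{a}}=\mathbb{E}\ket{a'}$, and Cauchy--Schwarz. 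The genuine gap is the step you defer as ``the crux'': you never execute it, and it cannot be executed, because with $\mathcal{D}_{\diamond}$ as defined in \ref{eq:diamond_distance} the constant-one inequality is false. Take a single qubit, $V=I$ and $V'=e^{\pm i\alpha Z}$ with probability $1/2$ each, so $\mathbb{E}\left[V'\right]=\cos\alpha\,I$ and $\norm{V-\mathbb{E}\left[V'\right]}=1-\cos\alpha$. The unentangled input $\ket{+}$ already gives
\begin{equation*}
\mathcal{D}\left(\ketbra{+}{+},\mathbb{E}\left[V'\ketbra{+}{+}V'^{\dagger}\right]\right)=\tfrac{1}{2}\left(1-\cos2\alpha\right)=\sin^{2}\alpha=\left(1+\cos\alpha\right)\left(1-\cos\alpha\right),
\end{equation*}
which exceeds $\norm{V-\mathbb{E}\left[V'\right]}$ for every $\alpha\in(0,\pi/2)$ and approaches twice it as $\alpha\to0$. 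This is exactly the ensemble produced by \ref{lem:random_phase_rounding} at the worst-case bias $r=1/2$ (with $\alpha=\pi2^{-b}$), so the overshoot occurs in the paper's own application; whether the missing factor of $2$ reflects a convention mismatch with the cited source or an error in transcription, the only downstream effect is replacing $\pi^{2}/2^{2b+1}$ by $\pi^{2}/2^{2b}$ in \ref{thm:randomized_multiplex}, i.e.\ half a bit of precision.

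The same example shows concretely why your proposed mechanism -- coupling the covariance of the $\ket{a'}$ to the first-order deviation -- evaporates. At the optimal witness $\ket w=\ket{+}$ the overlaps are deterministic, $\braket{+}{a'_{\pm}}=\cos\alpha$, even though the vectors $\ket{a'_{\pm}}=\cos\alpha\ket{+}\pm i\sin\alpha\ket{-}$ spread: the randomness lives entirely in directions orthogonal to the witness, so Jensen is exact there, the quadratic term $-\mathbb{E}\left|\braket w{a-a'}\right|^{2}=-\left(1-\cos\alpha\right)^{2}$ is second order, and the linear term contributes its full $2\left(1-\cos\alpha\right)$ at leading order. More generally the unit-norm constraint controls the covariance only as $1-\norm{\mathbb{E}\ket{a'}}^{2}\le2\norm{\ket a-\mathbb{E}\ket{a'}}$ -- precisely the factor of two you were trying to beat, and it is saturated. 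The correct conclusion of your (otherwise well-structured) analysis is therefore not that a subtler argument recovers constant $1$, but that $2$ is the sharp constant for this inequality under definition \ref{eq:diamond_distance}.
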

Randomness is a powerful tool for approximating continuous quantities
with discrete quantities. Consider the deterministic rounding of a
complex phase $e^{i2\pi\theta}$. As the phase has a binary expansion
$\theta=\sum_{l=1}^{\infty}\theta_{l}2^{-l}\in[0,1)$, where $\theta_{l}\in\left\{ 0,1\right\} $,
its $b$-bit approximation $\theta^{(b)}\doteq\sum_{l=1}^{b}\theta_{l}2^{-l}-2^{-b-1}$
has error at most $2^{-b-1}$. Hence choosing $b=\left\lceil \log_{2}\left(\frac{\pi}{\epsilon}\right)\right\rceil $
guarantees

\begin{align}
\left|e^{i2\pi\theta}-e^{i2\pi\theta^{(b)}}\right| & \le2\pi\left(\theta-\theta^{(b)}\right)\le\frac{\pi}{2^{b}}\le\epsilon.\label{eq:deterministic_rounding}
\end{align}
We may in fact halve the bits of precision with a randomized rounding
strategy that will be useful in the following.
\begin{lem}[\label{lem:random_phase_rounding}Randomized rounding of phases]
 Let $X\sim\mathrm{Bernoulli}\left(r\right)$ where $r=2^{b}\left(\theta-\theta^{(b-1)}\right)$,
and let $\Theta^{(b)}\doteq\theta^{(b-1)}+2^{-b}X$ be a random $b$-bit
angle. Then $b=\left\lceil \frac{1}{2}\log_{2}\left(\frac{\pi^{2}}{2\epsilon}\right)\right\rceil $
bits suffice to bound
\begin{equation}
\left|e^{i2\pi\theta}-\mathbb{E}\left[e^{i2\pi\Theta^{(b)}}\right]\right|\le\epsilon.\label{eq:random_phase_approximation}
\end{equation}
\end{lem}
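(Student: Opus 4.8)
The plan is to exploit the one structural feature that separates this randomized rule from the deterministic one: the random angle $\Theta^{(b)}$ is an \emph{unbiased} rounding of $\theta$. First I would read off from the definitions that $\mathbb{E}\left[\Theta^{(b)}\right]=\theta$. Indeed, since $\Theta^{(b)}=\theta^{(b-1)}+2^{-b}X$ with $X\sim\mathrm{Bernoulli}(r)$ and $r=2^{b}\left(\theta-\theta^{(b-1)}\right)$, linearity gives $\mathbb{E}\left[\Theta^{(b)}\right]=\theta^{(b-1)}+2^{-b}r=\theta$, independently of the precise meaning of $\theta^{(b-1)}$. I would also confirm that $r\in[0,1]$, so that the Bernoulli law is well defined; this is guaranteed by the binary-expansion structure of the rounding and is in any case presupposed by the statement.

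Next I would factor out the deterministic phase. Setting $Y\doteq\Theta^{(b)}-\theta$, so that $\mathbb{E}[Y]=0$, and using $\left|e^{i2\pi\theta}\right|=1$,
\begin{equation}
\left|e^{i2\pi\theta}-\mathbb{E}\left[e^{i2\pi\Theta^{(b)}}\right]\right|=\left|1-\mathbb{E}\left[e^{i2\pi Y}\right]\right|.
\end{equation}
The decisive move is that the mean-zero property annihilates the linear term: subtracting $i2\pi\mathbb{E}[Y]=0$ and applying the elementary remainder bound $\left|e^{ix}-1-ix\right|\le x^{2}/2$ pointwise inside the expectation yields
\begin{equation}
\left|1-\mathbb{E}\left[e^{i2\pi Y}\right]\right|=\left|\mathbb{E}\left[e^{i2\pi Y}-1-i2\pi Y\right]\right|\le2\pi^{2}\,\mathbb{E}\left[Y^{2}\right].
\end{equation}
This is exactly where the quadratic gain over the deterministic estimate \ref{eq:deterministic_rounding} arises: the error is controlled by the \emph{variance} of the rounding rather than by its first moment.

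The last step is a one-line computation. Because $\mathbb{E}[Y]=0$ we have $\mathbb{E}\left[Y^{2}\right]=\mathrm{Var}\left(\Theta^{(b)}\right)=2^{-2b}\,r(1-r)$, and the worst case $r(1-r)\le\frac{1}{4}$ gives $\mathbb{E}\left[Y^{2}\right]\le2^{-2b-2}$. Substituting produces the bound $\frac{\pi^{2}}{2}2^{-2b}$, and demanding that this be at most $\epsilon$ rearranges to $b\ge\frac{1}{2}\log_{2}\left(\frac{\pi^{2}}{2\epsilon}\right)$, so the stated ceiling suffices.

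I expect the only genuinely delicate point to be arguing that the first-order term vanishes rather than merely being small; once unbiasedness is in hand, the rest is a short estimate. A more geometric alternative would bound $\left|e^{ir\phi}-(1-r)-re^{i\phi}\right|$ with $\phi=2\pi\cdot2^{-b}$, reading it as the gap between a point on the unit-circle arc and the chord joining its endpoints. This reaches the same constant, but the second-moment argument above is cleaner and makes the role of unbiasedness transparent, so that is the route I would present.
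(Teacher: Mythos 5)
Your proof is correct and arrives at exactly the paper's bound $\frac{\pi^{2}}{2^{2b+1}}$, but by a genuinely different route. The paper's proof is geometric: it observes that $\mathbb{E}\left[e^{i2\pi\Theta^{(b)}}\right]$ is a convex combination of two adjacent points on the unit circle, hence lies on the chord joining them, asserts that the gap between arc and chord is maximized at the halfway mark $r=1/2$, and there evaluates it exactly as $1-\cos\left(\pi/2^{b}\right)\le\frac{\pi^{2}}{2^{2b+1}}$. You instead isolate the unbiasedness $\mathbb{E}\left[\Theta^{(b)}\right]=\theta$, factor out the target phase, and use the second-order Taylor remainder $\left|e^{ix}-1-ix\right|\le x^{2}/2$ so that the vanishing linear term leaves an error controlled by the variance, $2\pi^{2}\,\mathrm{Var}\left(\Theta^{(b)}\right)=2\pi^{2}2^{-2b}r(1-r)\le\frac{\pi^{2}}{2^{2b+1}}$. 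Each approach has something to recommend it: the paper's argument is shorter and gives the exact worst-case value, though its claim about where the chord--arc distance is maximized is asserted from geometric intuition rather than proved; your argument replaces that assertion with the elementary inequality $r(1-r)\le\frac{1}{4}$, gives a bound valid pointwise in $r$ (showing the error actually vanishes as $r\to0$ or $r\to1$), and makes transparent that the quadratic gain comes precisely from unbiasedness --- so it generalizes immediately to any unbiased rounding distribution, not just the two-point Bernoulli one. Your closing remark is also accurate: the ``geometric alternative'' you sketch is essentially the paper's own proof.
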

\begin{proof}
The distance between the boundary of a circle and any point on a given
chord is maximized at the halfway mark $r=1/2$. Hence \ref{eq:random_phase_approximation}
is bounded by $\left|1-\frac{e^{-i\pi2^{-b}}+e^{i\pi2^{-b}}}{2}\right|=\left|1-\cos\left(\pi/2^{b}\right)\right|\le\frac{\pi^{2}}{2^{2b+1}}$.
\end{proof}
In other words, the random discrete angles are designed to match the
target angle in expectation $\mathbb{E}\left[\Theta^{(b)}\right]=\theta$,
and the simplest example is the distribution on $\theta^{(b-1)}\pm2^{-b}.$ 

\section{\label{sec:Multiplexed-rotation-circuit}Multiplexed rotation circuit}

The multiplexed rotation gate \ref{eq:multiplex} has a few well-known
implementations. Our approach adds randomization on top of prior state-of-art
that encodes the desired rotation angles as binary integers in a quantum
lookup table \citep{vonBurg2020carbon}. Let us define the table-lookup
unitary as
\begin{equation}
\gate D_{\vcon,b}\ket j\ket z=\ket j\ket{z\oplus\theta_{j}^{(b)}}.\label{eq:lookup}
\end{equation}
Controlled on an integer $j\in[\vcon]$, table-lookup performs a bit-wise
XOR of the $b$-bit angle $\theta_{j}^{(b)}$ into the input register
$\ket z$, typically initialized as $\ket 0^{\otimes b}$. Subsequently,
a controlled rotation 
\begin{equation}
\gate R_{b}\ket{\theta_{j}^{(b)}}\ket{\psi}=\ket{\theta_{j}}e^{i2\pi\theta_{j}^{(b)}Z}\ket{\psi},\label{eq:rotation}
\end{equation}
is performed. With a final step of uncomputing $D$, this realizes
the multiplexed rotation $\U{\vec{\theta}}=\gate D_{\vcon,b}^{\dagger}\gate R_{b}\gate D_{\vcon,b}$
through the sequence
\begin{align}
\ket j\ket{\psi}\ket 0\underset{\gate D_{\vcon,b}}{\rightarrow} & \ket j\ket{\psi}\ket{\theta_{j}^{(b)}}\underset{\gate R_{b}}{\rightarrow}\ket je^{i2\pi\theta_{j}^{(b)}Z}\ket{\psi}\ket{\theta_{j}^{(b)}}\nonumber \\
\underset{\gate D_{\vcon,b}^{\dagger}}{\rightarrow} & \ket je^{i2\pi\theta_{j}^{(b)}Z}\ket{\psi}\ket 0.\label{eq:multiplex_circuit_steps}
\end{align}
The case $\U{\overleftrightarrow{\theta}}$ in \ref{eq:multiplex_many}
may be trivially implemented by applying \ref{eq:multiplex} $\vrots$
times. As the $\gate T$ cost may depends $\vcon$, a more $\gate T$
efficient approach \citep{vonBurg2020carbon} applies $\lceil n/k\rceil$
layers of multiplexed rotation gates each using $\gate D_{\vcon,kb}$.
This uses more space as $kb$ bits are written out in parallel. In
the extreme case, $\gate D_{\vcon,nb}$ may write out all $\vrots$
rotation angles at once.

Whereas prior approaches took a deterministic rounding of $\theta$,
we halve the bits of precision needed by applying a randomized rounding.
Let us approximate $\U{\vec{\theta}}$ by $\U{\vec{\Theta}^{(b)}}$,
where each angle $\theta_{j}$ is replaced by the random $b$-bit
variable $\Theta_{j}^{(b)}$ as described in \ref{lem:random_phase_rounding}.
Whenever $U\left[\vec{\theta}\right]$ is required by a quantum algorithm,
we randomly sample $\vec{\theta}'\sim\vec{\Theta}^{(b)}$and instead
apply $\U{\vec{\theta}'}$. The following theorem bounds the error
of this procedure.
\begin{thm}[\label{thm:randomized_multiplex}Randomized compilation of multiplexed
rotations]
The distance between the channels $\Uc{\vec{\theta}}$ and $\Uc{\vec{\Theta}^{(b)}}$
corresponding to the multiplexed rotation $\U{\vec{\theta}}$ and
its $b=\left\lceil \frac{1}{2}\log_{2}\left(\frac{\pi^{2}}{2\epsilon}\right)\right\rceil $-bit
randomization $\U{\vec{\Theta}^{(b)}}$ is
\begin{equation}
\mathcal{D}_{\diamond}\left(\Uc{\vec{\theta}},\Uc{\vec{\Theta}^{(b)}}\right)\le\epsilon.\label{eq:multiplex_error}
\end{equation}
\end{thm}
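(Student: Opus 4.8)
The plan is to lift the scalar phase estimate of \ref{lem:random_phase_rounding} to the full multiplexed channel by way of the operator-norm bound in \ref{lem:Diamond-distance}. First I would instantiate \ref{lem:Diamond-distance} with the fixed target $V=\U{\vec{\theta}}$ and the randomly drawn unitary $V'=\U{\vec{\theta}'}$, $\vec{\theta}'\sim\vec{\Theta}^{(b)}$, whose averaged channel is exactly $\Uc{\vec{\Theta}^{(b)}}$. This reduces \ref{eq:multiplex_error} to bounding the spectral norm
\begin{equation}
\mathcal{D}_{\diamond}\!\left(\Uc{\vec{\theta}},\Uc{\vec{\Theta}^{(b)}}\right)\le\norm{\U{\vec{\theta}}-\mathbb{E}\!\left[\U{\vec{\Theta}^{(b)}}\right]},
\end{equation}
so the whole argument becomes a single operator-norm estimate.

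Next I would exploit the block-diagonal form of \ref{eq:multiplex}. By linearity of expectation, and because the $j$-th block of $\U{\vec{\Theta}^{(b)}}$ depends only on the marginal $\Theta_{j}^{(b)}$, we have $\mathbb{E}[\U{\vec{\Theta}^{(b)}}]=\sum_{j}\ketbra jj\otimes\mathbb{E}[e^{i2\pi\Theta_{j}^{(b)}Z}]$, so no independence across $j$ is required. Consequently the difference $\U{\vec{\theta}}-\mathbb{E}[\U{\vec{\Theta}^{(b)}}]$ is a direct sum over $j\in[\vcon]$ of the single-qubit operators $e^{i2\pi\theta_{j}Z}-\mathbb{E}[e^{i2\pi\Theta_{j}^{(b)}Z}]$. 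Since the spectral norm of a direct sum is the maximum of the norms of its summands, the problem collapses to controlling $\max_{j}\norm{e^{i2\pi\theta_{j}Z}-\mathbb{E}[e^{i2\pi\Theta_{j}^{(b)}Z}]}$.

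Finally, within each block I would diagonalise in the eigenbasis of $Z$. Because $Z$ has spectrum $\{+1,-1\}$, each such operator is diagonal with entries $e^{\pm i2\pi\theta_{j}}-\mathbb{E}[e^{\pm i2\pi\Theta_{j}^{(b)}}]$, which are complex conjugates of one another and hence of equal magnitude. The block norm therefore equals the scalar quantity $|e^{i2\pi\theta_{j}}-\mathbb{E}[e^{i2\pi\Theta_{j}^{(b)}}]|$ already controlled by \ref{lem:random_phase_rounding}, which for $b=\left\lceil\frac{1}{2}\log_{2}\left(\frac{\pi^{2}}{2\epsilon}\right)\right\rceil$ is at most $\epsilon$ uniformly in $j$; taking the maximum over $j$ closes the chain and yields \ref{eq:multiplex_error}. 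The argument is mechanical once the structure is exposed; the only point demanding care is the clean interchange of expectation, the direct-sum decomposition, and the spectral norm --- specifically that the per-block dependence lets expectation act block-wise, and that the $\pm1$ spectrum of $Z$ collapses each $2\times2$ block exactly onto the single-phase problem of \ref{lem:random_phase_rounding}.
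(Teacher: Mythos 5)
Your proposal is correct and follows essentially the same route as the paper's proof: invoke \ref{lem:Diamond-distance} to reduce the diamond distance to the spectral norm of $\U{\vec{\theta}}-\mathbb{E}[\U{\vec{\Theta}^{(b)}}]$, use the block-diagonal structure to bound this by the maximum single-block norm, and close with the scalar estimate of \ref{lem:random_phase_rounding}. The only difference is that you spell out two steps the paper leaves implicit --- that expectation acts block-wise on the marginals and that the $Z$ eigenbasis collapses each block to the scalar phase problem --- which is a faithful elaboration rather than a different argument.
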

\begin{proof}
From \ref{lem:Diamond-distance}, \ref{eq:multiplex_error} is bounded
by $\norm{\Uc{\vec{\theta}}-\mathbb{E}\left[\Uc{\vec{\Theta}^{(b)}}\right]}$,
which equals
\begin{align}
 & \norm{\sum_{j=0}^{n-1}\ketbra jj\otimes\left(e^{i2\pi\theta_{j}Z}-\mathbb{E}\left[e^{i2\pi\Theta_{j}^{(b)}Z}\right]\right)}\nonumber \\
 & \le\max_{j}\norm{e^{i2\pi\theta_{j}Z}-\mathbb{E}\left[e^{i2\pi\Theta_{j}^{(b)}Z}\right]}\nonumber \\
 & \le\max_{j,\theta}\left|e^{i2\pi\theta}-\mathbb{E}\left[e^{i2\pi\Theta^{(b)}}\right]\right|\le\frac{\pi^{2}}{2^{2b+1}}.
\end{align}
We then apply \ref{lem:random_phase_rounding} to the last line.
\end{proof}
Some useful variations of $\Uc{\vec{\theta}}$ admit further optimization.
For instance, a triangle inequality bounds the error of $\mathcal{D}_{\diamond}\left(\Uc{\overleftrightarrow{\theta}},\Uc{\overleftrightarrow{\Theta}^{(b)}}\right)\le\frac{\vrots\pi^{2}}{2^{2b+1}}\le\epsilon$
by choosing $b=\left\lceil \frac{1}{2}\log_{2}\left(\frac{\vrots\pi^{2}}{2\epsilon}\right)\right\rceil $.
Furthermore, the same set of angles angle $\vec{\theta}$ may be used
multiple times such as in 
\begin{equation}
\U{\overleftrightarrow{\theta}}=\left(\prod_{k=0}^{\vrots-1}\U{\vec{\theta}}\cdot V_{k}\right).\label{eq:same_angles}
\end{equation}
Whereas the $\gate D_{\vcon,\vrots b}$ implementation of $\Uc{\overleftrightarrow{\Theta}^{(b)}}$
writes out $n$ rotation angles, these angles are repeated and only
differ in the least significant bit. Thus there are only $n+b-1$
unique bits and it suffices to instead use the significantly cheaper
table-lookup $\gate D_{\vcon,\vrots+b-1}$.

\subsection{Cost}

The multiplexed rotation has a gate cost corresponding to the sum
of its components \ref{eq:multiplex_circuit_steps} in \ref{tab:component_costs},
and an ancillae overhead corresponding to the maximum required across
all steps.

\begin{table}
\begin{centering}
\begin{tabular}{|c|c|c|}
\hline 
 & $\gate{Toffoli}$ & ancillae\tabularnewline
\hline 
$\gate D_{\vcon,b}$ & $\left\lceil \frac{\vcon}{\lambda}\right\rceil +(\lambda-1)b$ & $\left\lceil \log_{2}\frac{\vcon}{\lambda}\right\rceil +(\lambda-1)b$\tabularnewline
\hline 
$\gate D_{\vcon,b}^{\dagger}$ & $\left\lceil \frac{\vcon}{\lambda'}\right\rceil +\lambda'$ & $\left\lceil \log_{2}\frac{\vcon}{\lambda'}\right\rceil +\lambda'$\tabularnewline
\hline 
$\textsc{R}_{b}$ & $b$ & $b$\tabularnewline
\hline 
\end{tabular}
\par\end{centering}
\caption{\label{tab:component_costs}Cost of multiplexed rotation components
including table-lookup \ref{eq:lookup} and uncomputation parameterized
by $\lambda,\lambda'$ realizing a space-time trade-off, and the controlled
rotation \ref{eq:rotation}.}
\end{table}
The cost of table-lookup $\gate D_{\vcon,b}$ \citep{Low2018Trading,Berry2019CholeskyQubitization}
is parameterized by a power of $2$ integer $1\le\lambda\le c$ that
governs a trade-off between number of $\gate T$ gates and ancillary
qubits used. Moreover, the power of $2$ constraint on $\lambda$
can be dropped using modular division \citep{vonBurg2020carbon} costing
$\mathcal{O}\left(\log\vcon\right)$ $\gate T$ gates and $\mathcal{O}(1)$
ancillae. The minimum $\gate{Toffoli}$ count is thus $\approx2\sqrt{cb}$
at $\lambda\approx\sqrt{c/b}$ using $\approx\sqrt{cb}$ ancillae.
Nevertheless, a sub-optimal $\lambda$ may still be chosen to limit
the number of ancillae used. 

Importantly, all these ancillae are immediately reset to the $\ket 0$
state and may be reused. This is through a measurement-based uncomputation
process also parameterized by a power of $2$ integer $1\le\lambda'\le c$.
This uncomputation step has a minimum $\gate{Toffoli}$ cost of $\approx2\sqrt{c}$,
and will tend to have negligible cost compared to $\left(\vcon,b\right)$-lookup
whenever $\lambda,b>1$ as we may choose any $\lambda'\le(\lambda-1)b$.

The controlled rotation \ref{eq:rotation} is implemented by the phase
gradient technique \citep{Gidey2018Addition,Lee2020hypercontraction}
costing $b$ Toffoli gates. Given a phase state $\ket{\phi}=\frac{1}{\sqrt{2^{b+1}}}\sum_{k=1}^{2^{b+1}}e^{-i2\pi k/2^{b+1}}\ket k$
and a $b+1$-bit reversible adder $\gate{Add}$, observe that adding
the $b+1$ bit integer $\ket l$ kicks back a phase
\begin{align}
\gate{Add}\ket l\ket{\phi} & =e^{i2\pi l/2^{b+1}}\ket l\ket{\phi}.
\end{align}
As the phase state catalyzes the operation and may be reused, we ignore
its one-time cost of $\mathcal{O}\left(b\log(b/\epsilon)\right)$
$\gate T$ gates.

Hence the controlled adder $\textsc{Add}'=\gate{Add}\otimes\ketbra 01+\gate{Add}^{\dagger}\otimes\ketbra 11$
performs a $Z$ rotation as
\begin{equation}
\textsc{Add}'\ket l\ket{\phi}\ket x=e^{(-1)^{x}i2\pi l/2^{b+1}}\ket l\ket{\phi}\ket x.\label{eq:controlled_adding}
\end{equation}
Note that $\textsc{Add}'$ has the same Toffoli cost as $\textsc{Add}$
as an adder can be converted into a subtractor using only Clifford
gates. This follows from applying a bit-wise complement using $\textsc{Cnot}$
gates in the identity $a-b=\overline{\bar{a}+b}$. Comparing \ref{eq:rotation}
to \ref{eq:controlled_adding}, $\textsc{R}_{b}$ is thus implemented
by fixing the least significant bit of $l$ to $1$ and the remaining
$b$ bits to $\theta^{(b)}$. 

The overall $\gate{Toffoli}$ cost of \ref{eq:multiplex_many} through
$\lceil n/k\rceil$ layers of $\gate D_{\vcon,kb}$-lookup is then
\begin{equation}
\vrots b+\left\lceil \frac{\vrots}{k}\right\rceil \left(\left\lceil \frac{\vcon}{\lambda}\right\rceil +\left\lceil \frac{\vcon}{\lambda'}\right\rceil +(\lambda-1)kb+\lambda'\right).
\end{equation}
Some important special cases of the $\gate{Toffoli}$ count for parameter
choices $(k,\lambda,\lambda')$ include
\begin{align}
(\vrots,1,1) & :\vrots b+2\vcon+1,\nonumber \\
(1,1,1) & :\vrots(b+2\vcon+1),\nonumber \\
(k^{*},\lambda^{*},\lambda^{'*}) & :\vrots b+2\sqrt{\vcon}+2\sqrt{\vcon nb}.
\end{align}
Minimizing this by choosing $k^{*}\sim n,\lambda^{*}\sim\sqrt{c/nb},\lambda^{'*}\sim\sqrt{c}$
recovers \ref{eq:main_result}. 

\section{\label{sec:Simulation-of-electronic}Simulation of electronic structure}

Our work reduces the $\gate{Toffoli}$ and qubit count of previous
state-of-art on precisely estimating energy levels of active site
configurations of a nitrogen fixation enzyme (FeMoco) \citep{Reiher2016Reaction}
and a carbon dioxide fixation catalyst (XVIII) \citep{vonBurg2020carbon}
using the quantum phase estimation algorithm wrapped around queries
to quantum walk unitaries. The percentage reduction in the space-time
$\gate{Toffoli}$-qubit product is roughly $50\%$, as highlighted
in \ref{tab:chem_comparison}. We obtained these numbers by subtracting
the cost of deterministic multiplexed rotations from previous estimates
\citep{vonBurg2020carbon,Lee2020hypercontraction} and adding in the
cost of our randomized approach. 

In previous deterministic approaches, a very small approximation error
$\epsilon\sim10^{-3}$ is chosen for these examples in order to bound
the systematic shift in eigenvalues of each quantum walk. Quantum
phase estimation to error $\delta$ then estimates the energy with
an error of $\pm(\epsilon+\delta)$ with high probability $p$. In
our analysis, we instead chose a target diamond distance $\epsilon=0.05$,
which implies that the estimated energy will be correct to error $\delta$
with at least a confidence of $p-\epsilon$. Thus the cost of phase
estimation to the same overall error using our approach can be further
multiplied by a factor of $\frac{\delta}{\epsilon+\delta}\frac{p}{p-\epsilon},$
which is generally less than one for well-supported trial states.
However, we omit this factor in the table to focus on changes in cost
due to just changes in bits of precision.

In such electronic structure simulation problems, the goal is to accurately
estimate an eigenvalue of the 'double-factorized' Hermitian operator,
or Hamiltonian,
\begin{equation}
H=H_{h^{(0)}}+\sum_{r=1}^{R}H_{h^{(r)}}^{2},\;H_{h}\doteq\sum_{p,q=1}^{N}h_{pq}P_{p,0}P_{q,1},\label{eq:double_factorized}
\end{equation}
where the $\left\{ P_{j,x}\right\} $ are mutually anti-commuting
Pauli operations and $h^{(r)}$ are anti-Hermitian matrices \citep{vonBurg2020carbon}.
An alternate 'tensor-hypercontracted' representation \citep{Lee2020hypercontraction}
is
\begin{equation}
H=H=H_{h^{(0)}}+\sum_{r,s=1}^{R}t_{rs}H_{h^{(r)}}H_{h^{(s)}},\label{eq:THC}
\end{equation}
where $\left\{ h^{(r)}\right\} _{r=1}^{R}$ are additionally rank-$1$
and $t$ is a symmetric matrix. These are all equivalent to Fermion
Hamiltonians with one- and two-body number-conserving interactions.
The basic approach applies quantum phase estimation \citep{Nielsen2004}
on a quantum walk $W$ encoding the spectrum of $H$ \citep{Low2016Qubitization}.
Given a trial state $\ket{\psi}$, this estimates an eigenphase $W\ket{\lambda_{j}}=e^{i\phi_{j}}\ket{\lambda_{j}}$
where the index $j$ is sampled with probability $p_{j}=\left|\braket{\lambda_{j}}{\psi}\right|^{2}$,
and $\phi_{j}$ is estimated to a chosen error $\delta$ with high
probability. Within the algorithm, the most expensive component is
the sequence of the controlled-walk operators applied $M=\mathcal{O}(1/\delta)$
times.

Multiplexed rotations are a large fraction of the cost of these walk
operators in state-of-art quantum resource estimates on useful instances
of this problem \citep{vonBurg2020carbon,Lee2020hypercontraction}.
These occur through so-called controlled Givens rotation unitaries
$G=\sum_{j=0}^{\vcon-1}\ketbra jj\otimes G_{j}$ that diagonalize
the quadratic Hamiltonian $\sum_{pq}h_{pq}P_{p,0}P_{q,1}=\sum_{p\in[\Xi]}\lambda_{p}G_{p}^{\dagger}P_{0,0}P_{0,1}G_{p}$.
In \ref{eq:double_factorized}, we see that $\vcon_{\mathrm{DF}}=(R+1)\Xi$
Givens rotations are required to diagonalize its $R+1$ quadratic
Hamiltonians. Similarly \ref{eq:THC} requires $\vcon_{\mathrm{THC}}=R+N$
Givens rotations. The linear-combination-unitaries technique that
encodes $H$ into a quantum walk applies the sequence 
\begin{equation}
G\left(\id\otimes\cdots\right)G{}^{\dagger}\cdots G\left(\id\otimes\cdots\right)G^{\dagger},
\end{equation}
where no operation is applied on the $\ket j$ register within each
round bracket. As each Givens rotation decomposes into 
\begin{equation}
G_{j}=\left(\prod_{k=0}^{N-2}V_{k,2}e^{i2\pi\theta_{k,j}Z}V_{k,1}e^{i2\pi\theta_{k,j}Z}V_{k,0}\right),
\end{equation}
each half $G\left(\id\otimes\cdots\right)G{}^{\dagger}$ is a multiplexed
gate with $\vcon$ controls and $4(N-1)$ rotations, which matches
the form of \ref{eq:multiplex_many}. Naively, each multiplexed rotation
would require table-lookup $\gate D_{\vcon,4(N-1)b}$ outputting $4(N-1)$
bits. However, each rotation angle is repeated $4$ times, and following
\ref{eq:same_angles} it suffices to randomize only their least significant
bit. Hence it suffices to use roughly $4$ times fewer bits through
$\gate D_{\vcon,(N-1)(b-1)+4(N-1)}$. The diamond distance of $2M$
randomized applications of $G\left(\id\otimes\cdots\right)G{}^{\dagger}$
in quantum phase estimation is then from \ref{thm:randomized_multiplex}
at most $\frac{8M(N-1)\pi^{2}}{2^{2b+1}}\le\epsilon$ with the choice
\begin{equation}
b=\left\lceil \frac{1}{2}\log_{2}\left(\frac{4M(N-1)\pi^{2}}{\epsilon}\right)\right\rceil .
\end{equation}

\section{\label{sec:Concentration-of-measure}Concentration of measure}

We have focused on the diamond distance in bounding the error of our
randomized protocol. However, one might wonder about the error of
a given sampled instance of $\mathcal{V}'$ rather than the channel
average, that is, the quantity
\begin{equation}
\norm{V'-V}=\max_{\ket{\psi}}\norm{(V'-V)\ket{\psi}}.
\end{equation}
We may also reason about the average error by, for instance, bounding
$\mathbb{E}\left[\max_{\ket{\psi}}\norm{(V'-V)\ket{\psi}}\right]$.
However, this average is too pessimistic. It states that for any sampled
$V'$, we adversarially choose the worst $\ket{\psi}$. Subsequently,
we take the average error. In other words, this $V'$ is expected
to perform well for all possible input states. In practice however,
a sampled $V'$ is only applied once to a single input state. Future
repeats would re-sample $V'$ rather than reuse the original sample.
Hence the more relevant measure of error is
\begin{equation}
\mathbb{\max_{\ket{\psi}}E}\left[\norm{(V'-V)\ket{\psi}}\right].
\end{equation}

It is even more informative to bound the probability of large deviations
of the error from the mean
\begin{equation}
\max_{\ket{\psi}}\mathrm{Pr}\left[\norm{(V'-V)\ket{\psi}}\ge\epsilon\right].
\end{equation}
Closely following the derivation by Chen et al. \citep{Chen2020},
we prove in the appendix that that the error of a random instance
$\norm{\U{\overleftrightarrow{\theta}}-\U{\overleftrightarrow{\Theta}^{(b)}}}$
is small with high probability. In other words, 
\begin{equation}
\max_{\ket{\psi}}\mathbb{E}\left[\norm{\left(\U{\overleftrightarrow{\theta}}-\U{\overleftrightarrow{\Theta}^{(b)}}\right)\ket{\psi}}\right]\lesssim4\pi\frac{\sqrt{\vrots}}{2^{b}}.
\end{equation}
Note that the dependence on $2^{b}$ is quadratically worse than the
quantum channel analysis. In more detail, we establish tail bounds
on the error distribution
\begin{align}
\max_{\ket{\psi}}\mathrm{Pr}\left[\norm{\left(\U{\overleftrightarrow{\theta}}-\U{\overleftrightarrow{\Theta}^{(b)}}\right)\ket{\psi}}\ge\epsilon\right] & \le\exp\left(-\frac{\epsilon^{2}2^{2b}}{32e\pi^{2}\vrots}\right).
\end{align}
The bits of precision required is then 
\begin{align}
b & \ge4.88+\log_{2}\left(\log\left(\frac{1}{p}\right)\right)+\log_{2}\left(\frac{\sqrt{\vrots}}{\epsilon}\right).
\end{align}

Though interesting, any useful quantum computation terminates with
a measurement of the quantum state. In the context of measurement,
these single-shot error bounds are thus less useful as the diamond
distance directly characterizes the deviation of measurement outcomes
and is also tighter.

\section{Conclusion}

We have presented a very simple randomized protocol for implementing
the common quantum circuit subroutine of a single, or sequence of,
multiplexed rotations. Our implementation uses roughly half the $\T$
gates and half the qubits of prior deterministic approaches. Whereas
prior methods of stochastic compilation focus on approximating arbitrary
single-qubit rotations with samples from a dense $\epsilon$-mesh
of single-qubit rotations, we focus on stochastic compilation of more
complex multi-qubit operations in the context of directly minimizing
$\gate T$ count. Moreover, our randomization protocol, based only
on flipping biased coins, is extremely simple. We have evaluated the
relevance of our results in halving the space-time cost of important
applications such as electronic structure simulation based on qubitization,
and our work elucidate a path towards systematically applying stochastic
compilation methods beyond \citep{Rajput2021hybridizedsimulation}
Trotter-based methods of simulation \citep{Campbell2019Random}, such
as quantum signal processing \citep{Novo2020randomperspective,Low2016HamSim},
which has proven to be surprisingly elusive.

\onecolumn

\bibliographystyle{ACM-Reference-Format}
\bibliography{export}

%%% -*-BibTeX-*-
%%% Do NOT edit. File created by BibTeX with style
%%% ACM-Reference-Format-Journals [18-Jan-2012].

\begin{thebibliography}{28}

%%% ====================================================================
%%% NOTE TO THE USER: you can override these defaults by providing
%%% customized versions of any of these macros before the \bibliography
%%% command.  Each of them MUST provide its own final punctuation,
%%% except for \shownote{}, \showDOI{}, and \showURL{}.  The latter two
%%% do not use final punctuation, in order to avoid confusing it with
%%% the Web address.
%%%
%%% To suppress output of a particular field, define its macro to expand
%%% to an empty string, or better, \unskip, like this:
%%%
%%% \newcommand{\showDOI}[1]{\unskip}   % LaTeX syntax
%%%
%%% \def \showDOI #1{\unskip}           % plain TeX syntax
%%%
%%% ====================================================================

\ifx \showCODEN    \undefined \def \showCODEN     #1{\unskip}     \fi
\ifx \showDOI      \undefined \def \showDOI       #1{#1}\fi
\ifx \showISBNx    \undefined \def \showISBNx     #1{\unskip}     \fi
\ifx \showISBNxiii \undefined \def \showISBNxiii  #1{\unskip}     \fi
\ifx \showISSN     \undefined \def \showISSN      #1{\unskip}     \fi
\ifx \showLCCN     \undefined \def \showLCCN      #1{\unskip}     \fi
\ifx \shownote     \undefined \def \shownote      #1{#1}          \fi
\ifx \showarticletitle \undefined \def \showarticletitle #1{#1}   \fi
\ifx \showURL      \undefined \def \showURL       {\relax}        \fi
% The following commands are used for tagged output and should be
% invisible to TeX
\providecommand\bibfield[2]{#2}
\providecommand\bibinfo[2]{#2}
\providecommand\natexlab[1]{#1}
\providecommand\showeprint[2][]{arXiv:#2}

\bibitem[\protect\citeauthoryear{Berry, Gidney, Motta, McClean, and
  Babbush}{Berry et~al\mbox{.}}{2019}]%
        {Berry2019CholeskyQubitization}
\bibfield{author}{\bibinfo{person}{Dominic~W. Berry}, \bibinfo{person}{Craig
  Gidney}, \bibinfo{person}{Mario Motta}, \bibinfo{person}{Jarrod~R. McClean},
  {and} \bibinfo{person}{Ryan Babbush}.} \bibinfo{year}{2019}\natexlab{}.
\newblock \showarticletitle{Qubitization of Arbitrary Basis Quantum Chemistry
  Leveraging Sparsity and Low Rank Factorization}.
\newblock \bibinfo{journal}{\emph{arXiv preprint arXiv:1902.02134}}
  (\bibinfo{date}{2} \bibinfo{year}{2019}).
\newblock
\urldef\tempurl%
\url{http://arxiv.org/abs/1902.02134}
\showURL{%
\tempurl}


\bibitem[\protect\citeauthoryear{Bravyi and Kitaev}{Bravyi and Kitaev}{2005}]%
        {Bravyi2005Universal}
\bibfield{author}{\bibinfo{person}{Sergey Bravyi} {and} \bibinfo{person}{Alexei
  Kitaev}.} \bibinfo{year}{2005}\natexlab{}.
\newblock \showarticletitle{Universal quantum computation with ideal Clifford
  gates and noisy ancillas}.
\newblock \bibinfo{journal}{\emph{Physical Review A}}  \bibinfo{volume}{71}
  (\bibinfo{date}{2} \bibinfo{year}{2005}), \bibinfo{pages}{022316}.
\newblock
Issue 2.
\showISSN{1050-2947}
\urldef\tempurl%
\url{https://doi.org/10.1103/PhysRevA.71.022316}
\showDOI{\tempurl}


\bibitem[\protect\citeauthoryear{Campbell}{Campbell}{2017}]%
        {Campbell2017mixing}
\bibfield{author}{\bibinfo{person}{Earl Campbell}.}
  \bibinfo{year}{2017}\natexlab{}.
\newblock \showarticletitle{Shorter gate sequences for quantum computing by
  mixing unitaries}.
\newblock \bibinfo{journal}{\emph{Physical Review A}}  \bibinfo{volume}{95}
  (\bibinfo{date}{4} \bibinfo{year}{2017}), \bibinfo{pages}{042306}.
\newblock
Issue 4.
\showISSN{2469-9926}
\urldef\tempurl%
\url{https://doi.org/10.1103/PhysRevA.95.042306}
\showDOI{\tempurl}


\bibitem[\protect\citeauthoryear{Campbell}{Campbell}{2019}]%
        {Campbell2019Random}
\bibfield{author}{\bibinfo{person}{Earl Campbell}.}
  \bibinfo{year}{2019}\natexlab{}.
\newblock \showarticletitle{A random compiler for fast Hamiltonian simulation}.
\newblock \bibinfo{journal}{\emph{Physical Review Letters}}
  \bibinfo{volume}{(Accepted)} (\bibinfo{date}{11} \bibinfo{year}{2019}).
\newblock
\urldef\tempurl%
\url{http://arxiv.org/abs/1811.08017}
\showURL{%
\tempurl}


\bibitem[\protect\citeauthoryear{Chen, Huang, Kueng, and Tropp}{Chen
  et~al\mbox{.}}{2021}]%
        {Chen2020}
\bibfield{author}{\bibinfo{person}{Chi-Fang Chen}, \bibinfo{person}{Hsin-Yuan
  Huang}, \bibinfo{person}{Richard Kueng}, {and} \bibinfo{person}{Joel~A.
  Tropp}.} \bibinfo{year}{2021}\natexlab{}.
\newblock \showarticletitle{Concentration for Random Product Formulas}.
\newblock \bibinfo{journal}{\emph{PRX Quantum}}  \bibinfo{volume}{2}
  (\bibinfo{date}{10} \bibinfo{year}{2021}).
\newblock
Issue 4.
\showISSN{2691-3399}
\urldef\tempurl%
\url{https://doi.org/10.1103/PRXQuantum.2.040305}
\showDOI{\tempurl}


\bibitem[\protect\citeauthoryear{Childs, Maslov, Nam, Ross, and Su}{Childs
  et~al\mbox{.}}{2018}]%
        {Childs2018Towards}
\bibfield{author}{\bibinfo{person}{Andrew~M. Childs}, \bibinfo{person}{Dmitri
  Maslov}, \bibinfo{person}{Yunseong Nam}, \bibinfo{person}{Neil~J. Ross},
  {and} \bibinfo{person}{Yuan Su}.} \bibinfo{year}{2018}\natexlab{}.
\newblock \showarticletitle{Toward the first quantum simulation with quantum
  speedup}.
\newblock \bibinfo{journal}{\emph{Proceedings of the National Academy of
  Sciences}}  \bibinfo{volume}{115} (\bibinfo{date}{9} \bibinfo{year}{2018}),
  \bibinfo{pages}{9456--9461}.
\newblock
Issue 38.
\showISSN{0027-8424}
\urldef\tempurl%
\url{https://doi.org/10.1073/pnas.1801723115}
\showDOI{\tempurl}


\bibitem[\protect\citeauthoryear{Gidney}{Gidney}{2018}]%
        {Gidey2018Addition}
\bibfield{author}{\bibinfo{person}{Craig Gidney}.}
  \bibinfo{year}{2018}\natexlab{}.
\newblock \showarticletitle{Halving the cost of quantum addition}.
\newblock \bibinfo{journal}{\emph{Quantum}}  \bibinfo{volume}{2}
  (\bibinfo{date}{6} \bibinfo{year}{2018}), \bibinfo{pages}{74}.
\newblock
\showISSN{2521-327X}
\urldef\tempurl%
\url{https://doi.org/10.22331/q-2018-06-18-74}
\showDOI{\tempurl}


\bibitem[\protect\citeauthoryear{Gidney and Ekerå}{Gidney and Ekerå}{2021}]%
        {Gidney2021Factor}
\bibfield{author}{\bibinfo{person}{Craig Gidney} {and} \bibinfo{person}{Martin
  Ekerå}.} \bibinfo{year}{2021}\natexlab{}.
\newblock \showarticletitle{How to factor 2048 bit RSA integers in 8 hours
  using 20 million noisy qubits}.
\newblock \bibinfo{journal}{\emph{Quantum}}  \bibinfo{volume}{5}
  (\bibinfo{date}{4} \bibinfo{year}{2021}), \bibinfo{pages}{433}.
\newblock
\showISSN{2521-327X}
\urldef\tempurl%
\url{https://doi.org/10.22331/q-2021-04-15-433}
\showDOI{\tempurl}


\bibitem[\protect\citeauthoryear{Hastings}{Hastings}{2016}]%
        {Hastings2016Incoherent}
\bibfield{author}{\bibinfo{person}{M.~B. Hastings}.}
  \bibinfo{year}{2016}\natexlab{}.
\newblock \showarticletitle{Turning Gate Synthesis Errors into Incoherent
  Errors}.
\newblock \bibinfo{journal}{\emph{arXiv preprint arXiv:1612.01011}}
  (\bibinfo{date}{12} \bibinfo{year}{2016}).
\newblock
\urldef\tempurl%
\url{http://arxiv.org/abs/1612.01011}
\showURL{%
\tempurl}


\bibitem[\protect\citeauthoryear{Howard and Campbell}{Howard and
  Campbell}{2017}]%
        {Howard2017MagicStates}
\bibfield{author}{\bibinfo{person}{Mark Howard} {and} \bibinfo{person}{Earl
  Campbell}.} \bibinfo{year}{2017}\natexlab{}.
\newblock \showarticletitle{Application of a Resource Theory for Magic States
  to Fault-Tolerant Quantum Computing}.
\newblock \bibinfo{journal}{\emph{Physical Review Letters}}
  \bibinfo{volume}{118} (\bibinfo{date}{3} \bibinfo{year}{2017}).
\newblock
Issue 9.
\showISSN{0031-9007}
\urldef\tempurl%
\url{https://doi.org/10.1103/PhysRevLett.118.090501}
\showDOI{\tempurl}


\bibitem[\protect\citeauthoryear{Kitaev}{Kitaev}{1997}]%
        {Kitaev1997Diamond}
\bibfield{author}{\bibinfo{person}{A~Yu Kitaev}.}
  \bibinfo{year}{1997}\natexlab{}.
\newblock \showarticletitle{Quantum computations: algorithms and error
  correction}.
\newblock \bibinfo{journal}{\emph{Russian Mathematical Surveys}}
  \bibinfo{volume}{52} (\bibinfo{date}{12} \bibinfo{year}{1997}),
  \bibinfo{pages}{1191--1249}.
\newblock
Issue 6.
\showISSN{0036-0279}
\urldef\tempurl%
\url{https://doi.org/10.1070/RM1997v052n06ABEH002155}
\showDOI{\tempurl}


\bibitem[\protect\citeauthoryear{Kliuchnikov, Maslov, and Mosca}{Kliuchnikov
  et~al\mbox{.}}{2013}]%
        {Kliuchnikov2013synthesis}
\bibfield{author}{\bibinfo{person}{Vadym Kliuchnikov}, \bibinfo{person}{Dmitri
  Maslov}, {and} \bibinfo{person}{Michele Mosca}.}
  \bibinfo{year}{2013}\natexlab{}.
\newblock \showarticletitle{Fast and Efficient Exact Synthesis of Single-Qubit
  Unitaries Generated by Clifford and T Gates}.
\newblock \bibinfo{journal}{\emph{Quantum Info. Comput.}}  \bibinfo{volume}{13}
  (\bibinfo{date}{7} \bibinfo{year}{2013}), \bibinfo{pages}{607–630}.
\newblock
Issue 7–8.
\showISSN{1533-7146}


\bibitem[\protect\citeauthoryear{Lee, Berry, Gidney, Huggins, McClean, Wiebe,
  and Babbush}{Lee et~al\mbox{.}}{2021}]%
        {Lee2020hypercontraction}
\bibfield{author}{\bibinfo{person}{Joonho Lee}, \bibinfo{person}{Dominic~W.
  Berry}, \bibinfo{person}{Craig Gidney}, \bibinfo{person}{William~J. Huggins},
  \bibinfo{person}{Jarrod~R. McClean}, \bibinfo{person}{Nathan Wiebe}, {and}
  \bibinfo{person}{Ryan Babbush}.} \bibinfo{year}{2021}\natexlab{}.
\newblock \showarticletitle{Even More Efficient Quantum Computations of
  Chemistry Through Tensor Hypercontraction}.
\newblock \bibinfo{journal}{\emph{PRX Quantum}}  \bibinfo{volume}{2}
  (\bibinfo{date}{7} \bibinfo{year}{2021}), \bibinfo{pages}{030305}.
\newblock
Issue 3.
\showISSN{2691-3399}
\urldef\tempurl%
\url{https://doi.org/10.1103/PRXQuantum.2.030305}
\showDOI{\tempurl}


\bibitem[\protect\citeauthoryear{Low and Chuang}{Low and Chuang}{2017}]%
        {Low2016HamSim}
\bibfield{author}{\bibinfo{person}{Guang~Hao Low} {and}
  \bibinfo{person}{Isaac~L. Chuang}.} \bibinfo{year}{2017}\natexlab{}.
\newblock \showarticletitle{Optimal Hamiltonian Simulation by Quantum Signal
  Processing}.
\newblock \bibinfo{journal}{\emph{Physical Review Letters}}
  \bibinfo{volume}{118} (\bibinfo{date}{1} \bibinfo{year}{2017}),
  \bibinfo{pages}{010501}.
\newblock
Issue 1.
\showISSN{0031-9007}
\urldef\tempurl%
\url{https://doi.org/10.1103/PhysRevLett.118.010501}
\showDOI{\tempurl}


\bibitem[\protect\citeauthoryear{Low and Chuang}{Low and Chuang}{2019}]%
        {Low2016Qubitization}
\bibfield{author}{\bibinfo{person}{Guang~Hao Low} {and}
  \bibinfo{person}{Isaac~L. Chuang}.} \bibinfo{year}{2019}\natexlab{}.
\newblock \showarticletitle{Hamiltonian simulation by qubitization}.
\newblock \bibinfo{journal}{\emph{Quantum}}  \bibinfo{volume}{3}
  (\bibinfo{year}{2019}), \bibinfo{pages}{163}.
\newblock
\urldef\tempurl%
\url{https://doi.org/doi.org/10.22331/q-2019-07-12-163}
\showDOI{\tempurl}


\bibitem[\protect\citeauthoryear{Low, Kliuchnikov, and Schaeffer}{Low
  et~al\mbox{.}}{2018}]%
        {Low2018Trading}
\bibfield{author}{\bibinfo{person}{Guang~Hao Low}, \bibinfo{person}{Vadym
  Kliuchnikov}, {and} \bibinfo{person}{Luke Schaeffer}.}
  \bibinfo{year}{2018}\natexlab{}.
\newblock \showarticletitle{Trading T-gates for dirty qubits in state
  preparation and unitary synthesis}.
\newblock \bibinfo{journal}{\emph{arXiv preprint arXiv:1812.00954}}
  (\bibinfo{date}{12} \bibinfo{year}{2018}).
\newblock
\urldef\tempurl%
\url{http://arxiv.org/abs/1812.00954}
\showURL{%
\tempurl}


\bibitem[\protect\citeauthoryear{Nam, Su, and Maslov}{Nam
  et~al\mbox{.}}{2020}]%
        {Nam2020Fourier}
\bibfield{author}{\bibinfo{person}{Yunseong Nam}, \bibinfo{person}{Yuan Su},
  {and} \bibinfo{person}{Dmitri Maslov}.} \bibinfo{year}{2020}\natexlab{}.
\newblock \showarticletitle{Approximate quantum Fourier transform with O(n
  log(n)) T gates}.
\newblock \bibinfo{journal}{\emph{npj Quantum Information}}
  \bibinfo{volume}{6} (\bibinfo{date}{12} \bibinfo{year}{2020}).
\newblock
Issue 1.
\showISSN{2056-6387}
\urldef\tempurl%
\url{https://doi.org/10.1038/s41534-020-0257-5}
\showDOI{\tempurl}


\bibitem[\protect\citeauthoryear{Nielsen and Chuang}{Nielsen and
  Chuang}{2004}]%
        {Nielsen2004}
\bibfield{author}{\bibinfo{person}{Michael~A. Nielsen} {and}
  \bibinfo{person}{Isaac~L. Chuang}.} \bibinfo{year}{2004}\natexlab{}.
\newblock \bibinfo{booktitle}{\emph{Quantum computation and quantum
  information} (\bibinfo{edition}{1} ed.)}.
\newblock \bibinfo{publisher}{Cambridge University Press}.
\newblock


\bibitem[\protect\citeauthoryear{Novo}{Novo}{2020}]%
        {Novo2020randomperspective}
\bibfield{author}{\bibinfo{person}{Leonardo Novo}.}
  \bibinfo{year}{2020}\natexlab{}.
\newblock \showarticletitle{Bridging gaps between random approaches to quantum
  simulation}.
\newblock \bibinfo{journal}{\emph{Quantum Views}}  \bibinfo{volume}{4}
  (\bibinfo{date}{3} \bibinfo{year}{2020}).
\newblock
\urldef\tempurl%
\url{https://doi.org/10.22331/qv-2020-03-19-33}
\showDOI{\tempurl}


\bibitem[\protect\citeauthoryear{Paetznick and Reichardt}{Paetznick and
  Reichardt}{2013}]%
        {Paetznick2013GaugeFixing}
\bibfield{author}{\bibinfo{person}{Adam Paetznick} {and}
  \bibinfo{person}{Ben~W. Reichardt}.} \bibinfo{year}{2013}\natexlab{}.
\newblock \showarticletitle{Universal Fault-Tolerant Quantum Computation with
  Only Transversal Gates and Error Correction}.
\newblock \bibinfo{journal}{\emph{Physical Review Letters}}
  \bibinfo{volume}{111} (\bibinfo{date}{8} \bibinfo{year}{2013}).
\newblock
Issue 9.
\showISSN{0031-9007}
\urldef\tempurl%
\url{https://doi.org/10.1103/PhysRevLett.111.090505}
\showDOI{\tempurl}


\bibitem[\protect\citeauthoryear{Rajput, Roggero, and Wiebe}{Rajput
  et~al\mbox{.}}{2021}]%
        {Rajput2021hybridizedsimulation}
\bibfield{author}{\bibinfo{person}{Abhishek Rajput},
  \bibinfo{person}{Alessandro Roggero}, {and} \bibinfo{person}{Nathan Wiebe}.}
  \bibinfo{year}{2021}\natexlab{}.
\newblock \showarticletitle{Hybridized Methods for Quantum Simulation in the
  Interaction Picture}.
\newblock  (\bibinfo{date}{9} \bibinfo{year}{2021}).
\newblock


\bibitem[\protect\citeauthoryear{Reiher, Wiebe, Svore, Wecker, and
  Troyer}{Reiher et~al\mbox{.}}{2017}]%
        {Reiher2016Reaction}
\bibfield{author}{\bibinfo{person}{Markus Reiher}, \bibinfo{person}{Nathan
  Wiebe}, \bibinfo{person}{Krysta~M Svore}, \bibinfo{person}{Dave Wecker},
  {and} \bibinfo{person}{Matthias Troyer}.} \bibinfo{year}{2017}\natexlab{}.
\newblock \showarticletitle{Elucidating reaction mechanisms on quantum
  computers}.
\newblock \bibinfo{journal}{\emph{Proceedings of the National Academy of
  Sciences}}  \bibinfo{volume}{114} (\bibinfo{date}{7} \bibinfo{year}{2017}),
  \bibinfo{pages}{7555--7560}.
\newblock
Issue 29.
\showISSN{0027-8424}
\urldef\tempurl%
\url{https://doi.org/10.1073/pnas.1619152114}
\showDOI{\tempurl}


\bibitem[\protect\citeauthoryear{Ross and Selinger}{Ross and Selinger}{2016}]%
        {Ross2016Optimal}
\bibfield{author}{\bibinfo{person}{Neil~J Ross} {and} \bibinfo{person}{Peter
  Selinger}.} \bibinfo{year}{2016}\natexlab{}.
\newblock \showarticletitle{Optimal Ancilla-free Clifford+T Approximation of
  Z-rotations}.
\newblock \bibinfo{journal}{\emph{Quantum Information \& Computation}}
  \bibinfo{volume}{16} (\bibinfo{date}{9} \bibinfo{year}{2016}),
  \bibinfo{pages}{901--953}.
\newblock
Issue 11-12.
\showISSN{1533-7146}
\urldef\tempurl%
\url{http://dl.acm.org/citation.cfm?id=3179330.3179331}
\showURL{%
\tempurl}


\bibitem[\protect\citeauthoryear{Shende, Bullock, and Markov}{Shende
  et~al\mbox{.}}{2006}]%
        {shende2006synthesis}
\bibfield{author}{\bibinfo{person}{V.~V. Shende}, \bibinfo{person}{S.~S.
  Bullock}, {and} \bibinfo{person}{I.~L. Markov}.}
  \bibinfo{year}{2006}\natexlab{}.
\newblock \showarticletitle{Synthesis of quantum-logic circuits}.
\newblock \bibinfo{journal}{\emph{IEEE Transactions on Computer-Aided Design of
  Integrated Circuits and Systems}}  \bibinfo{volume}{25} (\bibinfo{date}{6}
  \bibinfo{year}{2006}), \bibinfo{pages}{1000--1010}.
\newblock
Issue 6.
\showISSN{0278-0070}
\urldef\tempurl%
\url{https://doi.org/10.1109/TCAD.2005.855930}
\showDOI{\tempurl}


\bibitem[\protect\citeauthoryear{Shor}{Shor}{1996}]%
        {Shor1996FaultTolerance}
\bibfield{author}{\bibinfo{person}{P.W. Shor}.}
  \bibinfo{year}{1996}\natexlab{}.
\newblock \showarticletitle{Fault-tolerant quantum computation}.
\newblock \bibinfo{journal}{\emph{Proceedings of 37th Conference on Foundations
  of Computer Science}}, \bibinfo{pages}{56--65}.
\newblock
\showISBNx{0-8186-7594-2}
\urldef\tempurl%
\url{https://doi.org/10.1109/SFCS.1996.548464}
\showDOI{\tempurl}


\bibitem[\protect\citeauthoryear{von Burg, Low, Häner, Steiger, Reiher,
  Roetteler, and Troyer}{von Burg et~al\mbox{.}}{2021}]%
        {vonBurg2020carbon}
\bibfield{author}{\bibinfo{person}{Vera von Burg}, \bibinfo{person}{Guang~Hao
  Low}, \bibinfo{person}{Thomas Häner}, \bibinfo{person}{Damian~S. Steiger},
  \bibinfo{person}{Markus Reiher}, \bibinfo{person}{Martin Roetteler}, {and}
  \bibinfo{person}{Matthias Troyer}.} \bibinfo{year}{2021}\natexlab{}.
\newblock \showarticletitle{Quantum computing enhanced computational
  catalysis}.
\newblock \bibinfo{journal}{\emph{Physical Review Research}}
  \bibinfo{volume}{3} (\bibinfo{date}{7} \bibinfo{year}{2021}),
  \bibinfo{pages}{033055}.
\newblock
Issue 3.
\showISSN{2643-1564}
\urldef\tempurl%
\url{https://doi.org/10.1103/PhysRevResearch.3.033055}
\showDOI{\tempurl}


\bibitem[\protect\citeauthoryear{Wang, Berry, de~Oliveira, and Sanders}{Wang
  et~al\mbox{.}}{2013}]%
        {Wang2013Solovay}
\bibfield{author}{\bibinfo{person}{Dong-Sheng Wang},
  \bibinfo{person}{Dominic~W. Berry}, \bibinfo{person}{Marcos~C. de Oliveira},
  {and} \bibinfo{person}{Barry~C. Sanders}.} \bibinfo{year}{2013}\natexlab{}.
\newblock \showarticletitle{Solovay-Kitaev Decomposition Strategy for
  Single-Qubit Channels}.
\newblock \bibinfo{journal}{\emph{Physical Review Letters}}
  \bibinfo{volume}{111} (\bibinfo{date}{9} \bibinfo{year}{2013}),
  \bibinfo{pages}{130504}.
\newblock
Issue 13.
\showISSN{0031-9007}
\urldef\tempurl%
\url{https://doi.org/10.1103/PhysRevLett.111.130504}
\showDOI{\tempurl}


\bibitem[\protect\citeauthoryear{Zeng, Cross, and Chuang}{Zeng
  et~al\mbox{.}}{2011}]%
        {Zeng2011Transversality}
\bibfield{author}{\bibinfo{person}{Bei Zeng}, \bibinfo{person}{Andrew Cross},
  {and} \bibinfo{person}{Isaac~L. Chuang}.} \bibinfo{year}{2011}\natexlab{}.
\newblock \showarticletitle{Transversality Versus Universality for Additive
  Quantum Codes}.
\newblock \bibinfo{journal}{\emph{IEEE Transactions on Information Theory}}
  \bibinfo{volume}{57} (\bibinfo{date}{9} \bibinfo{year}{2011}),
  \bibinfo{pages}{6272--6284}.
\newblock
Issue 9.
\showISSN{0018-9448}
\urldef\tempurl%
\url{https://doi.org/10.1109/TIT.2011.2161917}
\showDOI{\tempurl}


\end{thebibliography}

\appendix

\section{Concentration of randomized multiplexed rotations}

In this section, we prove the concentration of applying many randomized
instances of multiplexed rotation. Our derivation here is more-or-less
exactly the same as by Chen et al. \citep{Chen2020}, who instead
studied concentration of stochastic Trotter-based Hamiltonian simulation.
From Markov's inequality, $\mathrm{Pr}\left[|X|\ge a\right]\le\frac{\mathbb{E}\left[\varphi\left(\left|X\right|\right)\right]}{\varphi\left(\left|a\right|\right)}$
for any $a\ge0$ and monotonically increasing non-negative function
$\varphi$ satisfying $\varphi(a)>0$. Choosing $\varphi(x)=x^{q}$,
\begin{equation}
\mathrm{Pr}\left[\norm{(\tilde{Q}-Q)\ket{\psi}}\ge\epsilon\right]\le\frac{\mathbb{E}\left[\norm{(\tilde{Q}-Q)\ket{\psi}}^{q}\right]}{\epsilon^{q}}.
\end{equation}
The error is split into a systematic and a random component by adding
and subtracting its expectation
\begin{align}
\norm{(\tilde{Q}-Q)\ket{\psi}} & =\norm{\left(\tilde{Q}-\mathbb{E}\left[\tilde{Q}\right]+\mathbb{E}\left[\tilde{Q}\right]-Q\right)\ket{\psi}}\nonumber \\
 & \le\norm{\left(\mathbb{E}\left[\tilde{Q}\right]-Q\right)\ket{\psi}}+\norm{\left(\tilde{Q}-\mathbb{E}\left[\tilde{Q}\right]\right)\ket{\psi}}\nonumber \\
 & \le\underbrace{\norm{\left(\mathbb{E}\left[\tilde{Q}\right]-Q\right)}}_{\mathrm{Systematic}}+\underbrace{\norm{\left(\tilde{Q}-\mathbb{E}\left[\tilde{Q}\right]\right)\ket{\psi}}}_{\mathrm{Random}}.
\end{align}
Note that the systematic component $\norm{Q-\mathbb{E}\left[\tilde{Q}\right]}$
also upper bounds the diamond distance. As for the random component,
we use $\left(x+y\right)^{q}\le2^{q}\max\left\{ x^{q},y^{q}\right\} $
for $x,y\ge0$ to bound
\begin{equation}
\mathbb{E}\left[\norm{\left(\tilde{Q}-Q\right)\ket{\psi}}^{q}\right]\le2^{q}\max\left\{ \norm{\left(\mathbb{E}\left[\tilde{Q}\right]-Q\right)\ket{\psi}}^{q},\mathbb{E}\left[\norm{\left(\tilde{Q}-\mathbb{E}\left[\tilde{Q}\right]\right)\ket{\psi}}^{q}\right]\right\} .
\end{equation}
We now evaluate these quantities.

\subsection{Bounding the systematic shift}

The error due to the systematic component is $\norm{\left(\mathbb{E}\left[\tilde{Q}\right]-Q\right)\ket{\psi}}\le\norm{\mathbb{E}\left[\tilde{Q}\right]-Q}$.
This may be bounded by a triangle inequality on a telescoping sum
\begin{align}
\norm{\mathbb{E}\left[\tilde{Q}\right]-Q} & \le\norm{\left(\prod_{j=1}^{N-1}Q_{j}\cdot\mathbb{E}\left[\mathcal{U}_{j}\right]\right)-\left(\prod_{j=1}^{N-1}Q_{j}\cdot U\right)}\nonumber \\
 & \le N\norm{\mathbb{E}\left[\tilde{U}\right]-U}.
\end{align}
The problem thus reduces to bounding the systematic shift of a single
application of $\norm{\mathbb{E}\left[\tilde{U}\right]-U}$.

\subsection{Bounding the random shift}

We now bound the component $\mathbb{E}\left[\norm{\left(\tilde{Q}-\mathbb{E}\left[\tilde{Q}\right]\right)\ket{\psi}}^{q}\right]$
arising from fluctuations of $\tilde{Q}$ from its mean. The proof
is based on exploiting orthogonal vectors in a martingale difference
sequence using the fact 
\begin{equation}
\mathbb{E}\left[\norm{x+y}^{q}\right]^{2/q}\le\left(\mathbb{E}\norm x^{q}\right)^{2/q}+\left(q-1\right)\left(\mathbb{E}\norm y^{q}\right)^{2/q}
\end{equation}
for $q\ge2$ and random vectors $x,y$ that obey $\mathbb{E}\left[y|x\right]=0$.
Now consider a sequence of random vectors $\ket{\psi_{j}}=\left(\prod_{l=1}^{j}Q_{j}\cdot\tilde{U}_{j}\right)\cdot Q_{0}\ket{\psi}$.
Then $\left(\tilde{Q}-\mathbb{E}\left[\tilde{Q}\right]\right)\ket{\psi}=\ket{\psi_{N}}-\mathbb{E}\left[\ket{\psi_{N}}\right]$.
By adding $Q_{N}\mathbb{E}\left[\tilde{U}_{N}\right]\ket{\psi_{N-1}}-Q_{N}\mathbb{E}\left[\tilde{U}_{N}\right]\ket{\psi_{N-1}}=0$,
\begin{equation}
\left(\tilde{Q}-\mathbb{E}\left[\tilde{Q}\right]\right)\ket{\psi}=\underbrace{Q_{N}\left(\tilde{U}_{N}-\mathbb{E}\left[\tilde{U}_{N}\right]\right)\ket{\psi_{N-1}}}_{y}+\underbrace{Q_{N}\mathbb{E}\left[\tilde{U}_{N}\right]\left(\ket{\psi_{N-1}}-\mathbb{E}\left[\ket{\psi_{N-1}}\right]\right)}_{x}.
\end{equation}
Observe that conditioning on $x$ sets the value of $\ket{\psi_{N-1}}$
but not $\tilde{U}_{N}$. Hence $\mathbb{E}\left[Q_{N}\left(\tilde{U}_{N}-\mathbb{E}\left[\tilde{U}_{N}\right]\right)\ket{\psi_{N-1}}|x\right]=Q_{N}\mathbb{E}\left[\left(\tilde{U}_{N}-\mathbb{E}\left[\tilde{U}_{N}\right]\right)\right]\ket{\psi_{N-1}}=0$.
and we may apply inequality on $\mathbb{E}\left[\norm{x+y}^{q}\right]^{2/q}$
to obtain
\begin{align}
\mathbb{E}\left[\norm{\ket{\psi_{N}}-\mathbb{E}\left[\ket{\psi_{N}}\right]}^{q}\right]^{2/q} & =\mathbb{E}\left[\norm{\left(\tilde{Q}-\mathbb{E}\left[\tilde{Q}\right]\right)\ket{\psi}}^{q}\right]^{2/q}=\mathbb{E}\left[\norm{x+y}^{q}\right]^{2/q}\nonumber \\
 & \le\mathbb{E}\left[\norm{\mathbb{E}\left[\tilde{U}_{N}\right]\left(\ket{\psi_{N-1}}-\mathbb{E}\left[\ket{\psi_{N-1}}\right]\right)}^{q}\right]^{2/q}\nonumber \\
 & \qquad\qquad+\left(q-1\right)\mathbb{E}\left[\norm{\left(\tilde{U}_{N}-\mathbb{E}\left[\tilde{U}_{N}\right]\right)\ket{\psi_{N-1}}}^{q}\right]^{2/q}\nonumber \\
 & \le\mathbb{E}\left[\norm{\left(\ket{\psi_{N-1}}-\mathbb{E}\left[\ket{\psi_{N-1}}\right]\right)}^{q}\right]^{2/q}+\left(q-1\right)\mathbb{E}\left[\norm{\left(\tilde{U}_{N}-\mathbb{E}\left[\tilde{U}_{N}\right]\right)}^{q}\right]^{2/q}\nonumber \\
 & \le\left(q-1\right)\sum_{l=1}^{N}\mathbb{E}\left[\norm{\left(\tilde{U}_{l}-\mathbb{E}\left[\tilde{U}_{l}\right]\right)}^{q}\right]^{2/q}
\end{align}
We now substitute $\norm{\left(\tilde{U}_{l}-\mathbb{E}\left[\tilde{U}_{l}\right]\right)}\le\frac{\pi}{2^{-b+1}}$
to obtain
\begin{equation}
\mathbb{E}\left[\norm{\left(\tilde{Q}-\mathbb{E}\left[\tilde{Q}\right]\right)\ket{\psi}}^{q}\right]^{1/q}\le2\pi\sqrt{\frac{\left(q-1\right)N}{2^{2b}}}.
\end{equation}

\subsection{Tail bounds on approximation error}

We now combine the bounds on systematic and random errors.

\begin{align}
\mathbb{E}\left[\norm{\left(\mathcal{\tilde{Q}}-Q\right)\ket{\psi}}^{q}\right] & \le2^{q}\max\left\{ \norm{\left(\mathbb{E}\left[\tilde{Q}\right]-Q\right)\ket{\psi}}^{q},\mathbb{E}\left[\norm{\left(\tilde{Q}-\mathbb{E}\left[\mathcal{Q}\right]\right)\ket{\psi}}^{q}\right]\right\} \nonumber \\
 & \le2^{q}\max\left\{ \left(\frac{\pi^{2}}{2}\frac{N}{2^{2b}}\right)^{q},\left(2\pi\sqrt{\frac{\left(q-1\right)N}{2^{2b}}}\right)^{q}\right\} \nonumber \\
 & \le\left(4\pi\sqrt{\frac{\left(q-1\right)N}{2^{2b}}}\right)^{q}\le\left(\frac{4\pi}{2^{b}}\sqrt{qN}\right)^{q}.
\end{align}
Using Lyapunov's inequality, $\mathbb{E}\left[|X|^{q}\right]^{1/q}\le\mathbb{E}\left[|X|^{s}\right]^{1/s}$
for $0<q<s<\infty.$ Hence the expected error
\begin{align}
\bar{\epsilon}=\max_{\ket{\psi}}\mathbb{E}\left[\norm{\left(\tilde{Q}-Q\right)\ket{\psi}}\right] & \le\max_{\ket{\psi}}\mathbb{E}\left[\norm{\left(\tilde{Q}-Q\right)\ket{\psi}}^{2}\right]^{1/2}\le\frac{4\pi}{2^{b}}\sqrt{N}.\\
b & \ge\log_{2}\left(\frac{4\pi\sqrt{N}}{\bar{\epsilon}}\right)\approx3.65+\log_{2}\left(\frac{\sqrt{N}}{\bar{\epsilon}}\right)
\end{align}
Substituting into Markov's inequality, 
\begin{align}
\mathrm{Pr}\left[\norm{(\mathcal{\tilde{Q}}-Q)\ket{\psi}}\ge\epsilon\right] & \le\frac{\mathbb{E}\left[\norm{(\mathcal{\tilde{Q}}-Q)\ket{\psi}}^{q}\right]}{\epsilon^{q}}\nonumber \\
 & \le\left(\frac{4\pi}{\epsilon2^{b}}\sqrt{qN}\right)^{q}\nonumber \\
 & =\exp\left(q\left(\frac{1}{2}\ln\left(q\right)+\ln\left(\frac{4\pi}{\epsilon2^{b}}\sqrt{N}\right)\right)\right)\nonumber \\
 & =\exp\left(q\left(\frac{1}{2}\ln\left(q\right)+\ln\left(z\right)\right)\right)
\end{align}
The exponent is minimized by choosing $q=\frac{1}{ez^{2}}=\frac{\epsilon^{2}2^{2b}}{16e\pi^{2}N}$.
Hence
\begin{align}
p=\mathrm{Pr}\left[\norm{(\tilde{Q}-Q)\ket{\psi}}\ge\epsilon\right] & \le\exp\left(q\left(\ln\left(\sqrt{c}\right)\right)\right)=\exp\left(-\frac{\epsilon^{2}2^{2b}}{32e\pi^{2}N}\right)
\end{align}
The bits of precision required is then
\begin{align}
b & \ge\frac{1}{2}\log_{2}\left(32e\pi^{2}\log\left(\frac{1}{p}\right)\right)+\log_{2}\left(\frac{\sqrt{N}}{\epsilon}\right)\nonumber \\
 & \ge4.88+\log_{2}\left(\log\left(\frac{1}{p}\right)\right)+\log_{2}\left(\frac{\sqrt{N}}{\epsilon}\right).
\end{align}

\end{document}